\documentclass[10pt,leqno]{amsart}
\usepackage{graphicx}
\usepackage{indentfirst,csquotes}

\usepackage{amssymb,amsthm,amsmath}
\usepackage{xcolor,paralist,hyperref,fancyhdr,etoolbox}
\usepackage[ruled,vlined,linesnumbered]{algorithm2e}

\newtheorem{theorem}{Theorem}[]
\newtheorem{definition}[theorem]{Definition}

\newtheorem{proposition}[theorem]{Proposition}
\newtheorem{corollary}[theorem]{Corollary}

\hypersetup{ colorlinks=true, linkcolor=black, filecolor=black, urlcolor=black }

\usepackage{lipsum}

\begin{document}
\title{Norm-Query Complexity of Algorithmic Problems in Finite-Dimensional p-adic Normed Spaces} 
\author{Zhefan Duan}
\author{Huawei Wu*}
\date{\today}
\begin{abstract}
We study the deterministic norm-query complexity of computational problems in finite-dimensional vector spaces over $\mathbb{Q}_p$ equipped with an arbitrary ultrametric norm. For orthogonalization, we prove that no uniform finite query bound depending only on the dimension exists: for every deterministic algorithm that produces an $N$-orthogonal basis for every ultrametric norm $N$, the number of norm queries is unbounded as $N$ varies. We then study the Longest Vector Problem (LVP) for a rank-$m$ $p$-adic lattice. By adapting a brute-force search to the general norm-query setting and eliminating the scalar redundancy among nonzero coefficient vectors modulo $p$, we obtain an algorithm using exactly $(p^m-1)/(p-1)$ norm queries for $m\ge 2$, and prove that no deterministic norm-query algorithm can use fewer queries in the worst case. Finally, we consider the Closest Vector Problem (CVP). Apart from the trivial cases in which no norm query is needed, we prove that the deterministic worst-case norm-query complexity of the CVP is unbounded, even when the lattice and the target vector are fixed.
\end{abstract} 
\maketitle

\begingroup
\renewcommand{\thefootnote}{}
\footnotetext[0]{*Corresponding author. E-mail: wuhuawei1996@gmail.com\\
\indent Zhefan Duan is with the Department of Mathematical Sciences, Tsinghua University, Beijing, China.\\
\indent Huawei Wu is an independent researcher.}
\endgroup

\bigskip

\section{Introduction}

Throughout this paper, $p$ denotes a prime number, $\mathbb{Q}_p$ the field of $p$-adic numbers, and $\mathbb{Z}_p$ the ring of $p$-adic integers. We write
\[
\mathbb{F}_p := \{0, 1, \dots, p-1\}
\]
for the finite field with $p$ elements. For $a\in\mathbb{Q}_p$, we write $|a|_p$ for the $p$-adic absolute value of $a$ and $v_p(a)$ for the $p$-adic valuation of $a$.

For a positive integer $r$, we denote by $\mathbb{P}^{r}(\mathbb{F}_p)$ the $r$-dimensional projective space over $\mathbb{F}_p$, viewed here simply as the set of one-dimensional $\mathbb{F}_p$-subspaces of $\mathbb{F}_p^{r+1}$. In particular,
\[
\#\mathbb{P}^{r}(\mathbb{F}_p)=1+p+\cdots+p^r=\frac{p^{r+1}-1}{p-1}.
\]

Let $V$ be a vector space over $\mathbb{Q}_p$. An ultrametric norm on $V$ is a function $N:V\to \mathbb{R}_{\ge 0}$ such that for all $x,y\in V$ and all $a\in\mathbb{Q}_p$, the following conditions hold:
\begin{enumerate}[(1)]
    \item $N(x)=0$ if and only if $x=0$;
    \item $N(ax)=|a|_p\,N(x)$;
    \item $N(x+y)\le \max\{N(x),N(y)\}$.
\end{enumerate}
Condition (3) is called the strong triangle inequality. Because of this property, ultrametric norms behave quite differently from the usual norms on $\mathbb{R}^n$ or $\mathbb{C}^n$. For example, if $N(x)\neq N(y)$, then
\[
N(x+y)=\max\{N(x),N(y)\}.
\]

In \cite{Weil1974}, Weil proved the following result.

\begin{proposition}[{\cite[Chapter II, Proposition 3]{Weil1974}}]
Let $V$ be an $n$-dimensional vector space over $\mathbb{Q}_p$, and let $N$ be an ultrametric norm on $V$. Then there exists a basis $\{\alpha_1,\alpha_2,\ldots,\alpha_n\}$ of $V$ such that for any $a_1,\cdots,a_n\in\mathbb{Q}_p$,
\[
N\Big(\sum_{i=1}^n a_i\alpha_i\Big)=\max_{1\le i\le n}\bigl(|a_i|_p\,N(\alpha_i)\bigr).
\]
\end{proposition}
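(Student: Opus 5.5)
We argue by induction on $n=\dim V$, constructing the orthogonal basis one vector at a time by projecting onto a hyperplane. The case $n=1$ is trivial: any nonzero $\alpha_1$ works by homogeneity (2). For the inductive step the key tool is a \emph{best approximation} lemma. If $W\subset V$ is a subspace and $x\in V\setminus W$, then the function $w\mapsto N(x-w)$ on $W$ attains its infimum. Granting this, pick $w_0$ attaining the minimum and set $\alpha=x-w_0$. Then $N(\alpha)\le N(\alpha-w)$ for all $w\in W$, and this will make $\alpha$ orthogonal to $W$ in the strong sense
\[
N(a\alpha+w)=\max\{|a|_pN(\alpha),N(w)\}.
\]

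To derive orthogonality from the minimality, we may assume $a\neq 0$ and scale so that $a=1$. Minimality gives $N(\alpha+w)\ge N(\alpha)$. If $N(w)\ne N(\alpha)$, the strict-inequality remark after the definition gives the claim directly. If $N(w)=N(\alpha)$, the strong triangle inequality gives $N(\alpha+w)\le N(\alpha)$, and together with minimality this yields equality with $\max$. Now choose a hyperplane $W$ and apply the induction hypothesis to $W$ with the restricted norm, obtaining an orthogonal basis $\alpha_1,\dots,\alpha_{n-1}$. Take $\alpha_n$ orthogonal to $W$ as above. Then
\[
N\Bigl(\sum a_i\alpha_i\Bigr)=\max\Bigl\{|a_n|_pN(\alpha_n),\,N\Bigl(\sum_{i<n}a_i\alpha_i\Bigr)\Bigr\}=\max_i |a_i|_pN(\alpha_i).
\]

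The main obstacle is the existence of the minimizer, which needs local compactness of $\mathbb{Q}_p$ and some form of norm equivalence. I would also prove this by induction, strengthening the inductive statement to include the following claim: on an $n$-dimensional space, every ultrametric norm is equivalent to the sup-norm $\max|a_i|_p$ in a fixed coordinate system. One direction, $N\le C\max|a_i|_p$, is immediate from the strong triangle inequality. For the other, we use the inductive orthogonal basis $\alpha_1,\dots,\alpha_{n-1}$ of $W$. On $W$, $N$ is exactly a weighted sup-norm, so $W$ is complete and closed in $(V,N)$. Hence $d=\inf_{w\in W}N(x-w)>0$ (otherwise $x$ would lie in the closure of $W$). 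From this one gets a lower bound $N(ax+w)\ge c\max(|a|_p,N(w))$.

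With equivalence in hand, minimizing $N(x-w)$ reduces to a continuous function on the bounded set $\{w:N(w)\le N(x)\}$. Any $w$ outside this set has $N(x-w)=N(w)>N(x)=N(x-0)$, so the minimizer can be sought inside it. That set is compact, since it is closed and bounded in $\mathbb{Z}_p$-coordinates scaled by powers of $p$. Therefore the infimum is attained. An alternative route avoids compactness: the values of $N$ on $x+W$ lie in the discrete set $\bigcup_i N(\alpha_i)p^{\mathbb{Z}}\cup N(x)p^{\mathbb{Z}}$, which accumulates only at $0$, so a positive infimum is attained. That positivity is exactly the closedness of $W$ established above.
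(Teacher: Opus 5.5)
Your main argument is correct. The paper does not prove this proposition itself; it only cites Weil. Your route is the standard best-approximation proof, and it matches what the paper says about Weil's argument: Deng's reduction of orthogonalization to a sequence of CVPs is an effective version of your step of minimizing $N(x-w)$ over a hyperplane $W$. The derivation of $N(a\alpha+w)=\max\{|a|_pN(\alpha),N(w)\}$ from minimality, the compactness argument on $\{w\in W:N(w)\le N(x)\}$, and the final assembly of the basis are all fine.

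The alternative route at the end, however, rests on a false inclusion. On $V=\mathbb{Q}_p^2$ take $N(ae_1+be_2)=\max\{|a+b|_p,\,p^{-1/2}|a|_p\}$, with $W=\mathbb{Q}_pe_2$, $\alpha_1=e_2$ and $x=e_1$. Then $N(\alpha_1)=N(x)=1$, but $N(x-e_2)=p^{-1/2}\notin p^{\mathbb{Z}}$, and this value is exactly the minimum of $N$ on $x+W$. In general, the minimal value can be a genuinely new norm value.

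What is true is that non-minimal values are controlled. If $N(x-w)>N(x-w')$ for some $w'\in W$, then $N(x-w)=N(w'-w)\in\bigcup_{i<n}N(\alpha_i)p^{\mathbb{Z}}$. This set meets each interval $[d,D]$ with $d>0$ in only finitely many points. So if the infimum $d>0$ were not attained, every value of $N$ on $x+W$ would be non-minimal. A strictly decreasing sequence of such values would then converge to $d$ inside this set, which is impossible. With this correction the alternative route also works.

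Finally, the strengthened inductive claim (equivalence of $N$ with a sup-norm on all of $V$) is unnecessary for your main route. The minimization takes place inside $W$, where the induction hypothesis already makes $N$ equal to the weighted sup-norm $\max_i|b_i|_pN(\alpha_i)$. That gives the compactness directly, and the minimum is positive simply because $x\notin W$.
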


Weil's theorem naturally motivates the following definition.

\begin{definition}
Let $V$ be an $n$-dimensional vector space over $\mathbb{Q}_p$, and let $N$ be an ultrametric norm on $V$. A basis $\{\alpha_1,\alpha_2,\ldots,\alpha_n\}$ of $V$ is called \emph{$N$-orthogonal} if for every $a_1,\cdots,a_n\in\mathbb{Q}_p$,
\[
N\Big(\sum_{i=1}^n a_i\alpha_i\Big)=\max_{1\le i\le n}\bigl(|a_i|_p\,N(\alpha_i)\bigr).
\]
\end{definition}

Weil's proof is non-constructive. From a computational point of view, it is natural to ask whether one can design an algorithm, analogous to the Gram--Schmidt orthogonalization process, which starts from an arbitrary basis and constructs an $N$-orthogonal basis for a general ultrametric norm $N$.

In \cite{deng2025p}, Deng transformed Weil's proof into a deterministic orthogonalization algorithm. This algorithm is closely related to several computational problems for $p$-adic lattices, which are also studied in this paper.

\begin{definition}[{\cite{deng2022some}}]
Let $V$ be an $n$-dimensional vector space over $\mathbb{Q}_p$, and let $\alpha_1,\cdots,\alpha_m$ ($1\le m\le n$) be $\mathbb{Q}_p$-linearly independent vectors in $V$. The \emph{$p$-adic lattice} generated by $\alpha_1,\cdots,\alpha_m$ is
\[
\mathcal{L}(\alpha_1,\cdots,\alpha_m)
=
\Bigl\{\sum_{i=1}^m a_i\alpha_i:\ a_i\in\mathbb{Z}_p\Bigr\}.
\]
The integers $m$ and $n$ are called the \emph{rank} and the \emph{ambient dimension} of the lattice, respectively. If $m=n$, then the lattice is called \emph{full-rank}.
\end{definition}

\begin{definition}[{\cite{deng2022some}}]
Let $V$ be an $n$-dimensional vector space over $\mathbb{Q}_p$, let $N$ be an ultrametric norm on $V$, and let $\mathcal{L}:=\mathcal{L}(\alpha_1,\cdots,\alpha_m)$ be a $p$-adic lattice in $V$ of rank $m$. Define a decreasing sequence of norm values associated to $\mathcal{L}$ by
\[
\lambda_1(\mathcal{L})=\max_{1\le i\le m}N(\alpha_i),
\]
and, for $j\ge 1$,
\[
\lambda_{j+1}(\mathcal{L})
=
\max\{N(v): v\in\mathcal{L},\ N(v)<\lambda_j(\mathcal{L})\}.
\]
The \emph{Longest Vector Problem (LVP)} is to find a vector $v\in\mathcal{L}$ such that
\[
N(v)=\lambda_2(\mathcal{L}).
\]
Let $t\in V$ be a target vector. The \emph{Closest Vector Problem (CVP)} is to find a vector $v\in\mathcal{L}$ such that $N(v-t)$ is minimized.
\end{definition}

Deng showed in \cite{deng2025p} that computing an $N$-orthogonal basis can be reduced to solving a sequence of CVPs. Moreover, in \cite{deng2022some}, Deng et al.\ gave a deterministic algorithm for solving the CVP, which in turn yields a deterministic orthogonalization algorithm.

To measure the complexity of such algorithms, we count only norm evaluations (equivalently, norm queries), i.e., evaluations of the form $N(x)$. It is worth noting that the number of norm computations required by their CVP algorithm depends not only on the dimension $n$ and rank $m$ of the $p$-adic lattice $\mathcal{L}$, but also on the norm $N(t)$ of the target vector $t$ and the maximal lattice norm $\lambda_1(\mathcal{L})$. Consequently, the complexity of their orthogonalization algorithm (measured by the number of norm computations) depends not only on the dimension $n$ of the ambient space $V$, but also on the norms of the vectors in the initial basis.

This is very different from the classical Gram--Schmidt orthogonalization process. Indeed, if one rewrites all inner-product computations in Gram--Schmidt in terms of norm computations (for example, via the polarization identity), then the total number of norm computations depends only on the dimension of the vector space and is independent of the specific norm.

This raises a natural question: does there exist an orthogonalization algorithm for ultrametric normed vector spaces whose number of norm computations depends only on the dimension $n$? In this paper, we prove that the answer is negative.

For the LVP, Deng et al.\ \cite{deng2022some} considered $p$-adic lattices in a finite extension of $\mathbb{Q}_p$ and gave a deterministic brute-force search algorithm using $O(p^m)$ $p$-adic absolute-value computations, where $m$ is the rank of the lattice. We first adapt their brute-force approach to the more general setting of a finite-dimensional $\mathbb{Q}_p$-vector space equipped with an arbitrary ultrametric norm. We then exploit the invariance of the norm under multiplication by $p$-adic units to remove the scalar redundancy in the search: nonzero coefficient vectors modulo $p$ that differ by a nonzero scalar need not be queried separately. This reduces the number of norm queries to
\[
\frac{p^m-1}{p-1}
\]
for $m\ge 2$. We further prove a matching lower bound, showing that this is the exact deterministic worst-case norm-query complexity of the LVP.

We finally consider the CVP. There are two trivial situations in which no norm query is needed: when the target vector belongs to the lattice, and when the lattice has rank one and the target lies in its $\mathbb{Q}_p$-span. We prove that, apart from these cases, no uniform finite bound exists for the number of norm queries required by a deterministic CVP algorithm. In particular, the worst-case norm-query complexity is unbounded even when the lattice and the target vector are fixed.

The proofs of our lower bounds are based on adversarial indistinguishability arguments. Together, these results illustrate that computational problems in ultrametric normed spaces can exhibit complexity behavior substantially different from their Euclidean counterparts, and suggest that the norm-query viewpoint may provide a useful framework for studying other computational problems in this setting.

\section{Norm-Query Complexity of Orthogonalization}

We first consider the norm-query complexity of orthogonalization. The following theorem shows that, in contrast to the classical Gram--Schmidt process, no uniform finite query bound depending only on the dimension exists in the ultrametric setting.

\begin{theorem}\label{thm:ortho}
    Let $V$ be an $n$-dimensional vector space over $\mathbb{Q}_p$ with $n\ge 2$, and let $\{\alpha_1,\cdots,\alpha_n\}$ be a basis of $V$. Assume that $\mathcal{A}$ is a deterministic orthogonalization algorithm which accesses the norm only through evaluations of the form $N(x)$, and such that, for any ultrametric norm $N$ on $V$, the algorithm $\mathcal{A}$ produces, starting from the basis $\{\alpha_1,\dots,\alpha_n\}$, an $N$-orthogonal basis in finitely many steps.
    
    For an ultrametric norm $N$ on $V$, let $g_{\mathcal{A}}(N)$ denote the number of norm computations required by $\mathcal{A}$ in the process of constructing an $N$-orthogonal basis. Then
    \[
    \sup_{N} g_{\mathcal{A}}(N)=\infty,
    \]
    where the supremum is taken over all ultrametric norms $N$ on $V$.
    \end{theorem}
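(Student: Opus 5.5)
The plan is an adversary (indistinguishability) argument. I will build a one-parameter family of ultrametric norms $N_{u,M}$ with two properties. First, once the parameter $M$ is large, these norms agree on every vector that $\mathcal{A}$ ever queries. Second, their orthogonal bases are forced to change as $M\to\infty$. Suppose, for contradiction, that $\sup_N g_{\mathcal{A}}(N)\le K<\infty$. Identify $V$ with $\mathbb{Q}_p^n$ through the basis $\{\alpha_1,\dots,\alpha_n\}$. For $u\in\mathbb{Z}_p^\times$ and an integer $M\ge 0$, set
\[
N_{u,M}(x_1,\dots,x_n)=\max\bigl\{|x_1+ux_2|_p,\ p^{-M}|x_2|_p,\ |x_3|_p,\dots,|x_n|_p\bigr\}.
\]
This is an ultrametric norm: it is the max-norm composed with an invertible linear map. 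It has the orthogonal basis $e_1,\ -ue_1+e_2,\ e_3,\dots,e_n$, with norms $1,p^{-M},1,\dots,1$.

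\textbf{Step 1 (choice of $u$).} Consider the ``limiting'' oracle $x\mapsto f_u(x):=\max\{|x_1+ux_2|_p,|x_3|_p,\dots,|x_n|_p\}$. Its answers lie in $p^{\mathbb{Z}}\cup\{0\}$, a countable set. Since $\mathcal{A}$ is deterministic and makes at most $K$ queries, the set of all runs of $\mathcal{A}$ against answer sequences from this set is countable. Hence the set of all vectors that can appear, either as queries or as outputs, is countable. I choose $u\in\mathbb{Z}_p^\times$ outside the countable set
\[
\{-a/b:\ (a,b,\dots)\text{ is such a vector with }b\neq 0\}.
\]
For this $u$, every relevant vector $x$ with $(x_1,x_2)\ne(0,0)$ has $x_1+ux_2\neq 0$.

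\textbf{Step 2 (indistinguishability).} Run $\mathcal{A}$ with the oracle $f_u$, and let $y_1,\dots,y_k$ ($k\le K$) be the queries and $\beta_1,\dots,\beta_n$ the output. Take $M$ so large that $p^{-M}|y_{i,2}|_p\le |y_{i,1}+uy_{i,2}|_p$ for every $i$ with $(y_{i,1},y_{i,2})\neq 0$. The same inequality is needed for the two-dimensional parts of the $\beta_j$. Then $N_{u,M}(y_i)=f_u(y_i)$ for all $i$, so by induction on the query index the run of $\mathcal{A}$ on $N_{u,M}$ coincides with the run on $f_u$. Consequently $\mathcal{A}$ outputs the same basis $\beta$ for all sufficiently large $M$. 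By hypothesis, this $\beta$ is $N_{u,M}$-orthogonal for all such $M$.

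\textbf{Step 3 (contradiction via a volume invariant).} I will use the following lemma. If $\{\beta_j\}$ is an $N$-orthogonal basis, then the unit ball $B_N=\{N\le 1\}$ equals $\{\sum a_j\beta_j:\ |a_j|_pN(\beta_j)\le1\}$. Computing its Haar measure in the coordinates $\beta$ gives
\[
\mathrm{vol}(B_N)=|\det\beta|_p\prod_j p^{-\lceil -\log_p N(\beta_j)\rceil}\asymp|\det\beta|_p\prod_j N(\beta_j)^{-1},
\]
up to a factor in $[1,p^n]$ when the norms are not powers of $p$; for the norms $N_{u,M}$ the values are powers of $p$, so equality holds. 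Applying the lemma with the explicit orthogonal basis of $N_{u,M}$ gives $\mathrm{vol}(B_{N_{u,M}})=p^{M}$, which tends to infinity. Applying it with $\beta$, where $\det\beta$ is fixed and $N_{u,M}(\beta_j)=f_u(\beta_j)$ is a fixed positive value independent of $M$ (positive by Step 1 and the large-$M$ choice), gives a volume bounded independently of $M$. This is a contradiction, so $\sup_N g_{\mathcal{A}}(N)=\infty$.

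The main obstacle is Step 1. The adversary must fix $u$ before knowing the queries, while the queries depend on the answers, which in turn depend on $u$. The way around this is the countability of possible transcripts, which uses determinism, the bound $K$, and the discreteness of the value set. The volume lemma in Step 3 is routine but needs a careful statement.
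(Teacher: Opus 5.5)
Your argument is correct, but it differs from the paper's at both key points. To handle adaptivity, the paper uses the $(n-1)$-parameter family $N_{u,M}(x)=\max\{|\ell_u(x)|_p,\ p^{-M}\max_{i<n}|x^{(i)}|_p\}$. After each query it shrinks a $p$-adic cell $R(c_j,T_j,L_j)$ of parameters $(u,M)$ on which the $j$-th answer is constant, using local constancy near a point where $\ell_{c_j}(x_j)\neq 0$. Only after the last query does it fix $u$ in the final cell and choose a large $M$. You instead fix a single $u$ in advance. The reference answers lie in the countable set $p^{\mathbb{Z}}\cup\{0\}$, so there are only countably many transcripts of length at most $K$. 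Hence $u$ can be chosen so that no query or output vector with $(x_1,x_2)\neq(0,0)$ lies on $x_1+ux_2=0$, and from then on only $M$ varies. For the final contradiction, the paper exhibits the explicit vector $\ell_u(\tilde\alpha_2)\tilde\alpha_1-\ell_u(\tilde\alpha_1)\tilde\alpha_2\in\ker\ell_u$, whose norm is too small for $\{\tilde\alpha_j\}$ to be orthogonal. You use instead the basis-independent invariant $\mathrm{vol}(B_N)=|\det\beta|_p\prod_j N(\beta_j)^{-1}$, which equals $p^M$ for $N_{u,M}$ but is constant along your fixed output basis.

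What each buys: the paper's version is elementary and constructive. It needs no Haar measure and no cardinality argument, and it actually produces the bad norm from the algorithm. Yours is shorter, removes the dependence of the adversary on the algorithm's adaptivity in one stroke, and explains conceptually why no fixed basis can be orthogonal for all large $M$. The price is that the choice of $u$ is non-constructive, since it relies on the uncountability of $\mathbb{Z}_p^\times$.

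Two minor fixes are needed. First, the exponent $p^{-\lceil -\log_p N(\beta_j)\rceil}$ should read $p^{\lfloor -\log_p N(\beta_j)\rfloor}$; as written it gives $N(\beta_j)$ instead of its inverse. This is harmless here because all values of $N_{u,M}$ are powers of $p$. Second, $f_u$ is not a norm, so the ``$f_u$-run'' should be built inductively. Raise the threshold on $M$ after each query so that every partial transcript is realized by a genuine run on some $N_{u,M}$. This is what guarantees that the run halts within $K$ queries and outputs a basis.
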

    
    \begin{proof}
    We prove the theorem by contradiction. Assume that
    \[
    K:=\sup_{N} g_{\mathcal{A}}(N)<\infty.
    \]
    For some norms, the algorithm may terminate before the norm has been computed $K$ times. We can repeatedly compute $N(\alpha_1)$ to make up a total of $K$ evaluations, while keeping the original output unchanged. Therefore, we may assume without loss of generality that, for all norms, the algorithm computes the norm exactly $K$ times before it terminates. Moreover, since a basis is given, we may assume without loss of generality that $V=\mathbb{Q}_p^n$.

    For any $u=(u^{(1)},\cdots,u^{(n-1)})\in\mathbb{Q}_p^{n-1}$ and $M\in\mathbb{Z}$, define
    \[
    \ell_u(x)=x^{(n)}-\sum_{i=1}^{n-1}u^{(i)} x^{(i)}
    \quad\text{and}\quad
    N_{u,M}(x)=\max\Bigl(|\ell_u(x)|_p,\frac{\max_{1\le i\le n-1}|x^{(i)}|_p}{p^M}\Bigr)
    \]
    for $x=(x^{(1)},x^{(2)},\cdots,x^{(n)})\in\mathbb{Q}_p^n$. It is straightforward to verify that $N_{u,M}$ is an ultrametric norm on $\mathbb{Q}_p^n$. For any $c=(c^{(1)},c^{(2)},\cdots,c^{(n-1)})\in\mathbb{Q}_p^{n-1}$ and $T,L\in\mathbb{Z}$, define
    \[
    R(c,T,L):=\{(u,M)\in\mathbb{Q}_p^{n-1}\times\mathbb{Z}: u\equiv c\ (\mathrm{mod}\ p^T),\ M\ge L\},
    \]
    where $u\equiv c\ (\mathrm{mod}\ p^T)$ means that $u^{(i)}-c^{(i)}\in p^T\mathbb{Z}_p$ for all $1\le i\le n-1$.
    
    Let the vector used in the first norm query be
    \[
    x_1=(x_1^{(1)},x_1^{(2)},\cdots,x_1^{(n)}).
    \]
    Since this is the first norm query, the algorithm has not yet received any value of the norm, and hence $x_1$ is independent of the choice of the norm. We claim that there exist $c_1\in\mathbb{Q}_p^{n-1}$ and $T_1,L_1\in\mathbb{Z}$ such that for any $(u,M)\in R(c_1,T_1,L_1)$, we have
    \[
    N_{u,M}(x_1)=|\ell_{c_1}(x_1)|_p.
    \]
    
    There are two cases to consider:
    \begin{enumerate}
    \item If $x_1^{(i)}=0$ for all $1\le i\le n-1$, then for any $u\in\mathbb{Q}_p^{n-1}$ and $M\in\mathbb{Z}$,
    \[
    N_{u,M}(x_1)=|\ell_u(x_1)|_p=|x_1^{(n)}|_p.
    \]
    In this case, we can take $c_1=(0,0,\cdots,0)$ and $T_1=L_1=0$.
    
    \item Otherwise, we can choose $c_1\in\mathbb{Q}_p^{n-1}$ such that $\ell_{c_1}(x_1)\ne 0$. Put
    \[
    \gamma_1=v_p\bigl(\ell_{c_1}(x_1)\bigr)\in\mathbb{Z},
    \qquad
    \beta_1=\min_{1\le i\le n-1}v_p(x_1^{(i)})\in\mathbb{Z},
    \]
    and define
    \[
    T_1=\gamma_1-\beta_1+1,
    \qquad
    L_1=\gamma_1-\beta_1.
    \]
    For any $(u,M)\in R(c_1,T_1,L_1)$, consider the $p$-adic valuation of $\ell_u(x_1)$. Since
    \[
    \ell_u(x_1)=\ell_{c_1}(x_1)-\sum_{i=1}^{n-1}(u^{(i)}-c_1^{(i)})x_1^{(i)}
    \]
    and
    \begin{align*}
    v_p\Bigl(\sum_{i=1}^{n-1}(u^{(i)}-c_1^{(i)})x_1^{(i)}\Bigr)
    &\ge \min_{1\le i\le n-1}\Bigl(v_p(u^{(i)}-c_1^{(i)})+v_p(x_1^{(i)})\Bigr)\\
    &\ge \min_{1\le i\le n-1}v_p(u^{(i)}-c_1^{(i)})+\min_{1\le i\le n-1}v_p(x_1^{(i)})\\
    &\ge T_1+\beta_1=\gamma_1+1,
    \end{align*}
    we obtain
    \[
    v_p\bigl(\ell_u(x_1)\bigr)
    =\min\Bigl\{\gamma_1,\,v_p\Bigl(\sum_{i=1}^{n-1}(u^{(i)}-c_1^{(i)})x_1^{(i)}\Bigr)\Bigr\}
    =\gamma_1,
    \]
    and hence
    \[
    |\ell_u(x_1)|_p=p^{-\gamma_1}.
    \]
    Since $M\ge L_1$, we have
    \[
    \frac{\max_{1\le i\le n-1}|x_1^{(i)}|_p}{p^M}
    =p^{-\beta_1-M}
    \le p^{-\beta_1-L_1}
    =p^{-\gamma_1}.
    \]
    Therefore,
    \[
    N_{u,M}(x_1)
    =\max\Bigl\{|\ell_u(x_1)|_p,\frac{\max_{1\le i\le n-1}|x_1^{(i)}|_p}{p^M}\Bigr\}
    =p^{-\gamma_1}
    =|\ell_{c_1}(x_1)|_p.
    \]
    \end{enumerate}
    This proves the claim.
    
    We have shown that for every $(u,M)\in R(c_1,T_1,L_1)$, the first norm query returns the same real number $|\ell_{c_1}(x_1)|_p$. Since the algorithm $\mathcal{A}$ is deterministic, the vector used in the second norm query is therefore the same for all such $(u,M)$. Denote this vector by
    \[
    x_2=(x_2^{(1)},x_2^{(2)},\cdots,x_2^{(n)}).
    \]
    
    We claim that there exist $c_2\in\mathbb{Q}_p^{n-1}$ and $T_2,L_2\in\mathbb{Z}$ such that
    \[
    R(c_2,T_2,L_2)\subseteq R(c_1,T_1,L_1)
    \]
    and, for any $(u,M)\in R(c_2,T_2,L_2)$, we have
    \[
    N_{u,M}(x_2)=|\ell_{c_2}(x_2)|_p.
    \]
    
    Again there are two cases:
    \begin{enumerate}
    \item If $x_2^{(i)}=0$ for all $1\le i\le n-1$, then for any $u\in\mathbb{Q}_p^{n-1}$ and $M\in\mathbb{Z}$,
    \[
    N_{u,M}(x_2)=|\ell_u(x_2)|_p=|x_2^{(n)}|_p.
    \]
    In this case, we may take $c_2=c_1$, $T_2=T_1$, and $L_2=L_1$.
    
    \item Otherwise, we can choose $c_2\in\mathbb{Q}_p^{n-1}$ such that
    \[
    c_2\equiv c_1\ (\mathrm{mod}\ p^{T_1})
    \quad\text{and}\quad
    \ell_{c_2}(x_2)\ne 0.
    \]
    Put
    \[
    \gamma_2=v_p\bigl(\ell_{c_2}(x_2)\bigr)\in\mathbb{Z},
    \qquad
    \beta_2=\min_{1\le i\le n-1}v_p(x_2^{(i)})\in\mathbb{Z},
    \]
    and define
    \[
    T_2=\max\{T_1,\gamma_2-\beta_2+1\},
    \qquad
    L_2=\max\{L_1,\gamma_2-\beta_2\}.
    \]
    For any $(u,M)\in R(c_2,T_2,L_2)$, consider the $p$-adic valuation of $\ell_u(x_2)$. Since
    \[
    \ell_u(x_2)=\ell_{c_2}(x_2)-\sum_{i=1}^{n-1}(u^{(i)}-c_2^{(i)})x_2^{(i)}
    \]
    and
    \begin{align*}
    v_p\Bigl(\sum_{i=1}^{n-1}(u^{(i)}-c_2^{(i)})x_2^{(i)}\Bigr)
    &\ge \min_{1\le i\le n-1}\Bigl(v_p(u^{(i)}-c_2^{(i)})+v_p(x_2^{(i)})\Bigr)\\
    &\ge \min_{1\le i\le n-1}v_p(u^{(i)}-c_2^{(i)})+\min_{1\le i\le n-1}v_p(x_2^{(i)})\\
    &\ge T_2+\beta_2\\
    &\ge \gamma_2+1,
    \end{align*}
    we obtain
    \[
    v_p\bigl(\ell_u(x_2)\bigr)
    =\min\Bigl\{\gamma_2,\,v_p\Bigl(\sum_{i=1}^{n-1}(u^{(i)}-c_2^{(i)})x_2^{(i)}\Bigr)\Bigr\}
    =\gamma_2,
    \]
    and hence
    \[
    |\ell_u(x_2)|_p=p^{-\gamma_2}.
    \]
    Since $M\ge L_2\ge \gamma_2-\beta_2$, we have
    \[
    \frac{\max\limits_{1\le i\le n-1}|x_2^{(i)}|_p}{p^M}
    =p^{-\beta_2-M}
    \le p^{-\beta_2-(\gamma_2-\beta_2)}
    =p^{-\gamma_2}.
    \]
    Therefore,
    \[
    N_{u,M}(x_2)
    =\max\Bigl\{|\ell_u(x_2)|_p,\frac{\max\limits_{1\le i\le n-1}|x_2^{(i)}|_p}{p^M}\Bigr\}
    =p^{-\gamma_2}
    =|\ell_{c_2}(x_2)|_p.
    \]
    
    Moreover, if $(u,M)\in R(c_2,T_2,L_2)$, then $u\equiv c_2\ (\mathrm{mod}\ p^{T_2})$ and $M\ge L_2$. Since $c_2\equiv c_1\ (\mathrm{mod}\ p^{T_1})$ and $T_2\ge T_1$, it follows that $u\equiv c_1\ (\mathrm{mod}\ p^{T_1})$. Also, $L_2\ge L_1$ implies $M\ge L_1$. Hence $(u,M)\in R(c_1,T_1,L_1)$, and therefore
    \[
    R(c_2,T_2,L_2)\subseteq R(c_1,T_1,L_1).
    \]
    \end{enumerate}
    This proves the claim.
    
    Thus, for any $(u,M)\in R(c_2,T_2,L_2)$, the first norm query always returns
    \[
    y_1:=|\ell_{c_1}(x_1)|_p,
    \]
    and the second norm query always returns
    \[
    y_2:=|\ell_{c_2}(x_2)|_p.
    \]
    
    Continuing inductively, we obtain vectors $x_1,x_2,\dots,x_K\in\mathbb{Q}_p^n$, values $y_1,y_2,\dots,y_K\in\mathbb{R}_{\ge 0}$, and triples
    \[
    (c_m,T_m,L_m)\in \mathbb{Q}_p^{n-1}\times\mathbb{Z}\times\mathbb{Z}
    \qquad (m=1,2,\dots,K)
    \]
    such that for each $m=1,2,\dots,K$ and every $(u,M)\in R(c_m,T_m,L_m)$, the first $m$ norm queries performed by $\mathcal{A}$ are
    \[
    x_1,x_2,\dots,x_m
    \]
    (in this order), and the corresponding returned values are
    \[
    y_1,y_2,\dots,y_m
    \]
    (in this order). Moreover, for each $m=2,\dots,K$, we have
    \[
    R(c_m,T_m,L_m)\subseteq R(c_{m-1},T_{m-1},L_{m-1}).
    \]
    
    Because $\mathcal{A}$ is deterministic, the final output of $\mathcal{A}$ (the orthogonal basis) must be the same for all $(u,M)\in R(c_K,T_K,L_K)$. We denote this common output by
    \[
    \{\tilde{\alpha}_1,\tilde{\alpha}_2,\ldots,\tilde{\alpha}_n\},
    \]
    where
    \[
    \tilde{\alpha}_j=(\tilde{\alpha}_j^{(1)},\ldots,\tilde{\alpha}_j^{(n)})\in\mathbb{Q}_p^n
    \qquad (1\le j\le n).
    \]
    
    To derive a contradiction, it suffices to show that there exists $(u,M)\in R(c_K,T_K,L_K)$ such that $\{\tilde{\alpha}_1,\tilde{\alpha}_2,\ldots,\tilde{\alpha}_n\}$ is not an $N_{u,M}$-orthogonal basis.
    
    Indeed, we can choose $u\in\mathbb{Q}_p^{n-1}$ with $u\equiv c_K\ (\mathrm{mod}\ p^{T_K})$ such that
    \[
    \ell_u(\tilde{\alpha}_1)\ne 0
    \qquad\text{and}\qquad
    \ell_u(\tilde{\alpha}_2)\ne 0.
    \]
    Set
    \[
    z_1:=|\ell_u(\tilde{\alpha}_1)|_p,
    \qquad
    z_2:=|\ell_u(\tilde{\alpha}_2)|_p.
    \]
    Choose $M\ge L_K$ such that
    \begin{equation}\label{eq:small-second-term}
    \frac{\max\Bigl\{\max\limits_{1\le i\le n-1}|\tilde{\alpha}_1^{(i)}|_p,\ \max\limits_{1\le i\le n-1}|\tilde{\alpha}_2^{(i)}|_p\Bigr\}}{p^M}
    <\min\{z_1,z_2\}.
    \end{equation}
    Then
    \[
    N_{u,M}(\tilde{\alpha}_1)=z_1,
    \qquad
    N_{u,M}(\tilde{\alpha}_2)=z_2.
    \]
    
    Consider the vector
    \[
    v:=\ell_u(\tilde{\alpha}_2)\tilde{\alpha}_1-\ell_u(\tilde{\alpha}_1)\tilde{\alpha}_2\in\mathbb{Q}_p^n.
    \]
    If $\{\tilde{\alpha}_1,\tilde{\alpha}_2,\ldots,\tilde{\alpha}_n\}$ were an $N_{u,M}$-orthogonal basis, then by the defining property of an orthogonal basis,
    \begin{align}
    N_{u,M}(v)
    &=\max\Bigl\{|\ell_u(\tilde{\alpha}_2)|_p\,N_{u,M}(\tilde{\alpha}_1),\ |\ell_u(\tilde{\alpha}_1)|_p\,N_{u,M}(\tilde{\alpha}_2)\Bigr\}\notag\\
    &=\max\{z_2z_1,z_1z_2\}=z_1z_2. \label{eq:orth-value}
    \end{align}
    
    On the other hand, since $\ell_u$ is linear on $\mathbb{Q}_p^n$,
    \[
    \ell_u(v)=\ell_u(\tilde{\alpha}_2)\ell_u(\tilde{\alpha}_1)-\ell_u(\tilde{\alpha}_1)\ell_u(\tilde{\alpha}_2)=0.
    \]
    Hence, by the definition of $N_{u,M}$,
    \begin{align*}
    N_{u,M}(v)
    &=\max\Bigl\{|\ell_u(v)|_p,\frac{\max\limits_{1\le i\le n-1}|v^{(i)}|_p}{p^M}\Bigr\}\\
    &=\frac{\max\limits_{1\le i\le n-1}\bigl|\ell_u(\tilde{\alpha}_2)\tilde{\alpha}_1^{(i)}-\ell_u(\tilde{\alpha}_1)\tilde{\alpha}_2^{(i)}\bigr|_p}{p^M}\\
    &\le \frac{\max\Bigl\{|\ell_u(\tilde{\alpha}_2)|_p\cdot \max\limits_{1\le i\le n-1}|\tilde{\alpha}_1^{(i)}|_p,\ |\ell_u(\tilde{\alpha}_1)|_p\cdot \max\limits_{1\le i\le n-1}|\tilde{\alpha}_2^{(i)}|_p\Bigr\}}{p^M}\\
    &< \max\{|\ell_u(\tilde{\alpha}_2)|_p,|\ell_u(\tilde{\alpha}_1)|_p\}\cdot \min\{z_1,z_2\}
    \qquad\text{(by \eqref{eq:small-second-term})}\\
    &=\max\{z_1,z_2\}\cdot \min\{z_1,z_2\}=z_1z_2.
    \end{align*}
    Thus $N_{u,M}(v)<z_1z_2$, which contradicts \eqref{eq:orth-value}.
    
    Therefore, the assumption $K<\infty$ is false, and we conclude that
    \[
    \sup_N g_{\mathcal{A}}(N)=\infty.
    \]
    \end{proof}

\section{Norm-Query Complexity of the LVP}

    We begin by recalling the brute-force idea underlying the LVP algorithm of Deng et al.\ \cite{deng2022some}, adapted to the norm-query setting considered here.

Let $V$ be an $n$-dimensional vector space over $\mathbb{Q}_p$, let $N$ be an ultrametric norm on $V$, let $\alpha_1,\cdots,\alpha_m\in V$ be $\mathbb{Q}_p$-linearly independent vectors with $m\le n$, and let
\[
\mathcal{L}:=\mathcal{L}(\alpha_1,\cdots,\alpha_m)
=
\left\{\sum_{i=1}^m a_i\alpha_i:\ a_i\in\mathbb{Z}_p\right\}
\]
be the lattice generated by $\alpha_1,\cdots,\alpha_m$. For each $i\in\mathbb{N}_+$, write
\[
\lambda_i:=\lambda_i(\mathcal{L}).
\]
Our goal is to find a vector $x\in\mathcal{L}$ such that $N(x)=\lambda_2$.

Choose a set
\[
\mathcal{R}\subset\mathbb{F}_p^m\setminus\{0\}
\]
of representatives for the one-dimensional subspaces of $\mathbb{F}_p^m$. We may choose $\mathcal{R}$ so that
\[
e_1,\dots,e_m\in\mathcal{R},
\]
where $e_i$ is the $i$-th standard basis vector of $\mathbb{F}_p^m$. Then
\[
\#\mathcal{R}
=\#\mathbb{P}^{m-1}(\mathbb{F}_p)=
\frac{p^m-1}{p-1}.
\]

We identify the elements of $\mathbb{F}_p$ with their standard representatives in $\mathbb{Z}_p$, and define
\[
\mathcal{S}
:=
\left\{
\sum_{i=1}^m a_i\alpha_i:
(a_1,\dots,a_m)\in\mathcal{R},\
\frac{\lambda_1}{p}
<
N\Bigl(\sum_{i=1}^m a_i\alpha_i\Bigr)
<
\lambda_1
\right\}.
\]
We claim that
\[
\lambda_2=
\begin{cases}
\displaystyle\max_{x\in\mathcal{S}}N(x),
& \text{if }\mathcal{S}\neq\emptyset,\\[6pt]
\lambda_1/p,
& \text{if }\mathcal{S}=\emptyset.
\end{cases}
\]

Let $x\in\mathcal{L}$ be such that
\[
\frac{\lambda_1}{p}<N(x)<\lambda_1
\]
and write
\[
x=\sum_{i=1}^m b_i\alpha_i,
\qquad b_i\in\mathbb{Z}_p.
\]
Since $N(x)>\lambda_1/p$, we have $x\notin p\mathcal{L}$. Hence
\[
(\overline{b}_1,\dots,\overline{b}_m)\neq 0
\]
in $\mathbb{F}_p^m$. Choose $(a_1,\dots,a_m)\in\mathcal{R}$ representing the same one-dimensional subspace. Then there exists $t\in\mathbb{F}_p^\times$ such that
\[
(\overline{b}_1,\dots,\overline{b}_m)
=
t(a_1,\dots,a_m).
\]
Choosing a lift of $t$ to $\mathbb{Z}_p^\times$, we obtain
\[
x-t\sum_{i=1}^m a_i\alpha_i\in p\mathcal{L}.
\]
Therefore
\[
N\left(
x-t\sum_{i=1}^m a_i\alpha_i
\right)
\le
\frac{\lambda_1}{p}
<
N(x).
\]
By the ultrametric property,
\[
N(x)
=
N\left(t\sum_{i=1}^m a_i\alpha_i\right)
=
N\left(\sum_{i=1}^m a_i\alpha_i\right),
\]
since $|t|_p=1$. Hence every norm value in the interval $(\lambda_1/p,\lambda_1)$ attained by a vector in $\mathcal{L}$ is also attained by a vector in $\mathcal{S}$.

Suppose first that $\mathcal{S}\neq\emptyset$. If $x\in\mathcal{L}$ satisfies $N(x)<\lambda_1$, then either
\[
N(x)\le\frac{\lambda_1}{p},
\]
or $N(x)>\lambda_1/p$, in which case the argument above shows that there exists $y\in\mathcal{S}$ such that
\[
N(x)=N(y).
\]
Since every element of $\mathcal{S}$ has norm strictly larger than $\lambda_1/p$, it follows that
\[
\lambda_2=\max_{x\in\mathcal{S}}N(x).
\]

Now suppose that $\mathcal{S}=\emptyset$. Then no vector $x\in\mathcal{L}$ satisfies
\[
\frac{\lambda_1}{p}<N(x)<\lambda_1.
\]
Hence every $x\in\mathcal{L}$ with $N(x)<\lambda_1$ satisfies
\[
N(x)\le\frac{\lambda_1}{p}.
\]
On the other hand, choose $i$ such that $N(\alpha_i)=\lambda_1$. Then
\[
N(p\alpha_i)=\frac{\lambda_1}{p}.
\]
Therefore
\[
\lambda_2=\frac{\lambda_1}{p}.
\]
This proves the claim.

    From the above discussion, we obtain the following search algorithm for finding $x\in\mathcal{L}$ such that $N(x)=\lambda_2$.
    
    \begin{enumerate}
        \item Compute
        \[
        N\Bigl(\sum_{i=1}^m a_i\alpha_i\Bigr)
        \]
        for every $(a_1,\dots,a_m)\in\mathcal{R}$. Since $e_1,\dots,e_m\in\mathcal{R}$, these same queries also give $N(\alpha_1),\dots,N(\alpha_m)$, and hence determine
        \[
        \lambda_1=\max_{1\le i\le m}N(\alpha_i)
        \]
        without any additional norm query.
        \item Among the queried vectors, consider those satisfying
        \[
        \frac{\lambda_1}{p}<N\Bigl(\sum_{i=1}^m a_i\alpha_i\Bigr)<\lambda_1.
        \]
        If such vectors exist, output one having maximal norm.
        \item If no such vector exists, choose any $\alpha_i$ with $N(\alpha_i)=\lambda_1$ and output $p\alpha_i$.
    \end{enumerate}
    
    By construction, the algorithm performs exactly
    \[
    \#\mathcal{R}=\frac{p^m-1}{p-1}
    \]
    norm queries in total, including the queries needed to determine $\lambda_1$.
    
    For $m=1$, one may directly output $p\alpha_1$ without any norm query. Thus the nontrivial case for norm-query complexity is $m\ge 2$. In the next theorem, we will show that, for $m\ge 2$, the above query bound is optimal in the worst case (among deterministic norm-query algorithms).
    
    \begin{theorem}
    Let $V$ be an $n$-dimensional vector space over $\mathbb{Q}_p$, let $\alpha_1,\cdots,\alpha_m\in V$ be $\mathbb{Q}_p$-linearly independent vectors with $2\le m\le n$, and let
    \[
    \mathcal{L}:=\mathcal{L}(\alpha_1,\cdots,\alpha_m)
    =
    \left\{\sum_{i=1}^m a_i\alpha_i:\ a_i\in\mathbb{Z}_p\right\}
    \]
    be the lattice generated by $\alpha_1,\cdots,\alpha_m$. Assume that $\mathcal{A}$ is a deterministic algorithm which accesses the norm only through evaluations of the form $N(x)$, and such that for any ultrametric norm $N$ on $V$, starting from the $\mathbb{Z}_p$-basis $\{\alpha_1,\dots,\alpha_m\}$ of $\mathcal{L}$, the algorithm $\mathcal{A}$ outputs a vector $x\in\mathcal{L}$ satisfying
    \[
    N(x)=\lambda_2(\mathcal{L})
    \]
    in finitely many steps.
    
    For an ultrametric norm $N$ on $V$, let $g_{\mathcal{A}}(N)$ denote the number of norm queries required by $\mathcal{A}$ to find a vector $x\in\mathcal{L}$ with $N(x)=\lambda_2(\mathcal{L})$. Then
    \[
    \sup_N g_{\mathcal{A}}(N)\ge\frac{p^m-1}{p-1},
    \]
    where the supremum is taken over all ultrametric norms $N$ on $V$.
    \end{theorem}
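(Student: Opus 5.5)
The plan is an adversary (indistinguishability) argument built around one base norm and $\frac{p^m-1}{p-1}$ perturbed norms, one for each element of $\mathcal{R}$. Suppose, for contradiction, that $g_{\mathcal{A}}(N)<\frac{p^m-1}{p-1}$ for every $N$.

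\textbf{Base norm.} Extend $\alpha_1,\dots,\alpha_m$ to a basis $\alpha_1,\dots,\alpha_n$ of $V$. Define
\[
N_0\Bigl(\sum_{i=1}^n c_i\alpha_i\Bigr)=\max_i|c_i|_p .
\]
For $N_0$, every vector of $\mathcal{L}\setminus p\mathcal{L}$ has norm $1$, so $\lambda_1=1$ and $\lambda_2=1/p$. Run $\mathcal{A}$ on $N_0$. Let $x_1,\dots,x_k$ be its queries, with $k<\frac{p^m-1}{p-1}$, and let $x^\ast$ be its output. Then $N_0(x^\ast)=1/p$.

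\textbf{Perturbed norms.} Fix $r=(a_1,\dots,a_m)\in\mathcal{R}$ and put $w_r=\sum a_i\alpha_i$. Since $r\not\equiv 0\pmod p$, the vector $w_r$ is primitive, so there is a $\mathbb{Z}_p$-basis $\beta_1=w_r,\beta_2,\dots,\beta_m$ of $\mathcal{L}$. The change of basis lies in $\mathrm{GL}_m(\mathbb{Z}_p)$, which preserves the sup norm, so
\[
N_0\Bigl(\sum_{i\le m}c_i\beta_i+\sum_{i>m}c_i\alpha_i\Bigr)=\max_i|c_i|_p .
\]
Define
\[
N_r\Bigl(\sum_{i\le m}c_i\beta_i+\sum_{i>m}c_i\alpha_i\Bigr)=\max\bigl(p^{-1/2}|c_1|_p,\ |c_2|_p,\dots,|c_n|_p\bigr).
\]
This is an ultrametric norm. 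Moreover $N_r(y)\ne N_0(y)$ exactly when $|c_1|_p>\max_{i\ge2}|c_i|_p$. Equivalently, $y=s(w_r+z)$ with $s\in\mathbb{Q}_p^\times$ and $N_0(z)<1$; that is, after scaling $y$ to $N_0$-norm $1$ and reducing modulo the $N_0$-ball of radius $<1$, one obtains the class of $w_r$.

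The reduction of a nonzero vector, if it is a line of $\mathcal{L}/p\mathcal{L}$ at all, is a single line. Hence these "disagreement regions" are pairwise disjoint as $r$ ranges over $\mathcal{R}$, and each query $x_j$ lies in at most one of them. Queries whose dominant part lies in the complement $\mathrm{span}(\alpha_{m+1},\dots,\alpha_n)$ lie in none. Since $k<\#\mathcal{R}$, some $r$ has $N_r(x_j)=N_0(x_j)$ for all $j$.

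\textbf{Contradiction.} Because $\mathcal{A}$ is deterministic, it asks the same queries under $N_r$, gets the same answers, and outputs the same $x^\ast$. We have $N_r(x^\ast)\in\{N_0(x^\ast),p^{-1/2}N_0(x^\ast)\}=\{p^{-1},p^{-3/2}\}$. On the other hand, for $N_r$ we have $\lambda_1=1$, using $N_r(\beta_2)=1$, which is where $m\ge2$ is needed. Also $N_r(w_r)=p^{-1/2}$, and no lattice norm lies strictly between $p^{-1/2}$ and $1$, so $\lambda_2=p^{-1/2}$. Thus $x^\ast$ is a wrong answer for $N_r$, which is the desired contradiction.

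\textbf{Main obstacle.} The delicate step is the disjointness claim. One must precisely characterize where $N_r$ differs from $N_0$ for arbitrary queried vectors of $V$, not only lattice vectors, and check that this region depends on $y$ only through its "leading line" in $\mathcal{L}/p\mathcal{L}$. This holds because the sup-norm in the $\beta$-coordinates agrees with the one in the $\alpha$-coordinates.
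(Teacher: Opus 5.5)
Your proposal is correct and is essentially the paper's proof. It uses the same base norm $N_0$ and the same family of perturbed norms, each scaling by $p^{-1/2}$ the coefficient of a primitive lift of one line of $\mathcal{L}/p\mathcal{L}$; the same fact that each query separates $N_0$ from at most one perturbed norm; and the same contradiction between $\lambda_2=p^{-1/2}$ and $N(x^\ast)\le 1/p$. The differences are cosmetic: you describe the disagreement set via the $\mathrm{GL}_n(\mathbb{Z}_p)$-invariance of the sup norm and end with $N_r\in\{N_0,p^{-1/2}N_0\}$ pointwise, whereas the paper uses its map $\pi$ and the fact $x^\ast\in p\mathcal{L}$.
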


\begin{proof}
Let
\[
\mathcal{P}:=\mathbb{P}(\mathcal{L}/p\mathcal{L}),
\]
the set of one-dimensional $\mathbb{F}_p$-subspaces of
$\mathcal{L}/p\mathcal{L}$. Since
\[
\dim_{\mathbb{F}_p}(\mathcal{L}/p\mathcal{L})=m,
\]
we have
\[
\#\mathcal{P}=\frac{p^m-1}{p-1}.
\]

We prove the theorem by contradiction. Assume that
\[
K:=\sup_N g_{\mathcal{A}}(N)<\#\mathcal{P}.
\]
As in the proof of Theorem~\ref{thm:ortho}, we may assume without loss of
generality that, for every norm, the algorithm makes exactly $K$ norm
queries before it terminates.

Extend $\alpha_1,\dots,\alpha_m$ to a basis
\[
\alpha_1,\dots,\alpha_m,\alpha_{m+1},\dots,\alpha_n
\]
of $V$ over $\mathbb{Q}_p$. When several norms are considered
simultaneously, we write $\lambda_i^{(N)}(\mathcal{L})$ for the lattice
norm values associated with the norm $N$.

With respect to the above basis, define
\[
N_0\Bigl(\sum_{i=1}^n x_i\alpha_i\Bigr)
:=
\max_{1\le i\le n}|x_i|_p.
\]
Then
\[
\lambda_1^{(N_0)}(\mathcal{L})=1,
\qquad
\lambda_2^{(N_0)}(\mathcal{L})=\frac1p.
\]

For each $\xi\in\mathcal{P}$, choose a vector
$\beta_{\xi,1}\in\mathcal{L}$ whose image in
$\mathcal{L}/p\mathcal{L}$ spans $\xi$. Extend the image of
$\beta_{\xi,1}$ to an $\mathbb{F}_p$-basis of
$\mathcal{L}/p\mathcal{L}$, and choose lifts
$\beta_{\xi,2},\dots,\beta_{\xi,m}\in\mathcal{L}$ of the remaining
basis vectors. By Nakayama's lemma,
\[
\beta_{\xi,1},\dots,\beta_{\xi,m}
\]
form a $\mathbb{Z}_p$-basis of $\mathcal{L}$. For $m<j\le n$, put $\beta_{\xi,j}:=\alpha_j$. Then
\[
\beta_{\xi,1},\dots,\beta_{\xi,n}
\]
form a basis of $V$ over $\mathbb{Q}_p$. 

Define
\[
N_\xi\Bigl(\sum_{j=1}^n b_j\beta_{\xi,j}\Bigr)
:=
\max\left\{
p^{-1/2}|b_1|_p,\,
\max_{2\le j\le n}|b_j|_p
\right\}.
\]
It follows immediately from the ultrametric inequality for
$|\cdot|_p$ that $N_\xi$ is an ultrametric norm on $V$.

Since $\beta_{\xi,1},\dots,\beta_{\xi,m}$ form a
$\mathbb{Z}_p$-basis of $\mathcal{L}$, every $y\in\mathcal{L}$ can be
written as
\[
y=\sum_{j=1}^m b_j\beta_{\xi,j},
\qquad
b_j\in\mathbb{Z}_p.
\]
Hence
\[
N_\xi(y)\le 1
\qquad
\text{for every }y\in\mathcal{L}.
\]

We claim that
\[
\lambda_1^{(N_\xi)}(\mathcal{L})=1,
\qquad
\lambda_2^{(N_\xi)}(\mathcal{L})=p^{-1/2}
\]
for every $\xi\in\mathcal{P}$. Indeed, the images of $\alpha_1,\dots,\alpha_m$ form an
$\mathbb{F}_p$-basis of $\mathcal{L}/p\mathcal{L}$. Hence the
one-dimensional subspaces spanned by these images are pairwise distinct.
Since $m\ge2$, there exists some $\alpha_i$ such that the
one-dimensional subspace spanned by its image is different from $\xi$.
Write
\[
\alpha_i=\sum_{j=1}^m b_j\beta_{\xi,j},
\qquad
b_j\in\mathbb{Z}_p.
\]
Since the image of $\alpha_i$ does not lie in the one-dimensional
subspace spanned by the image of $\beta_{\xi,1}$, at least one of
$b_2,\dots,b_m$ is a unit in $\mathbb{Z}_p$. Hence
\[
N_\xi(\alpha_i)=1.
\]
Together with $N_\xi(y)\le1$ for every $y\in\mathcal{L}$, this gives
\[
\lambda_1^{(N_\xi)}(\mathcal{L})=1.
\]

Now let
\[
y=\sum_{j=1}^m b_j\beta_{\xi,j}\in\mathcal{L}
\]
satisfy $N_\xi(y)<1$. By the definition of $N_\xi$, we have $\max_{2\le i\le n}|b_j|_p<1$, which implies that 
\[
b_2,\dots,b_m\in p\mathbb{Z}_p.
\]
Therefore
\[
N_\xi(y)
\le
\max\{p^{-1/2},p^{-1}\}
=
p^{-1/2},
\]
which implies that
\[
\lambda_2^{(N_\xi)}(\mathcal{L})\le p^{-1/2}.
\]
Since
\[
N_\xi(\beta_{\xi,1})=p^{-1/2},
\]
we obtain
\[
\lambda_2^{(N_\xi)}(\mathcal{L})=p^{-1/2}.
\]
This proves the claim.

We next show that a single norm query can distinguish $N_0$ from at
most one norm in the family $\{N_\xi\}_{\xi\in\mathcal{P}}$. Let
\[
x=\sum_{i=1}^n x_i\alpha_i\in V\setminus\{0\},
\qquad
x_i\in\mathbb{Q}_p,
\]
and put
\[
s:=\min_{1\le i\le n}v_p(x_i),
\qquad
\widehat{x}:=p^{-s}x
=
\sum_{i=1}^n\widehat{x}_i\alpha_i.
\]
Then
\[
\widehat{x}_1,\dots,\widehat{x}_n\in\mathbb{Z}_p,
\]
at least one of them is a unit, and
\[
N_0(\widehat{x})=1.
\]

Suppose first that
\[
\widehat{x}_{m+1},\dots,\widehat{x}_n\in p\mathbb{Z}_p.
\]
Then at least one of $\widehat{x}_1,\dots,\widehat{x}_m$ is a unit.
Put
\[
x_{\mathcal{L}}
:=
\sum_{i=1}^m\widehat{x}_i\alpha_i\in\mathcal{L}.
\]
The image $\overline{x_{\mathcal{L}}}$ of $x_{\mathcal{L}}$ in
$\mathcal{L}/p\mathcal{L}$ is therefore nonzero, and we define
\[
\pi(x)
:=
\mathbb{F}_p\cdot\overline{x_{\mathcal{L}}}
\in\mathcal{P}.
\]
If
\[
\widehat{x}_{m+1},\dots,\widehat{x}_n
\]
are not all contained in $p\mathbb{Z}_p$, we leave $\pi(x)$ undefined.

We claim that, for every $\xi\in\mathcal{P}$,
\[
N_\xi(x)
=
\begin{cases}
p^{-1/2}N_0(x), & \text{if }\pi(x)=\xi,\\[4pt]
N_0(x), & \text{otherwise}.
\end{cases}
\]
Since
\[
N_0(x)=p^{-s}
\]
and
\[
N_\xi(x)
=
N_\xi(p^s\widehat{x})
=
p^{-s}N_\xi(\widehat{x})
=
N_0(x)N_\xi(\widehat{x}),
\]
it suffices to prove that
\[
N_\xi(\widehat{x})
=
\begin{cases}
p^{-1/2}, & \text{if }\pi(x)=\xi,\\[4pt]
1, & \text{otherwise}.
\end{cases}
\]

We first consider the case where $\pi(x)$ is defined. Write
\[
x_{\mathcal{L}}
=
\sum_{j=1}^m b_j\beta_{\xi,j},
\qquad
b_j\in\mathbb{Z}_p.
\]
If $\pi(x)=\xi$, then the nonzero image of $x_{\mathcal{L}}$ in
$\mathcal{L}/p\mathcal{L}$ lies in the one-dimensional subspace
spanned by the image of $\beta_{\xi,1}$. Since the images of
$\beta_{\xi,1},\dots,\beta_{\xi,m}$ form an $\mathbb{F}_p$-basis of
$\mathcal{L}/p\mathcal{L}$, it follows that
\[
b_1\in\mathbb{Z}_p^\times,
\qquad
b_2,\dots,b_m\in p\mathbb{Z}_p.
\]
Moreover, since $\pi(x)$ is defined,
\[
\widehat{x}_{m+1},\dots,\widehat{x}_n\in p\mathbb{Z}_p.
\]
Since $\beta_{\xi,j}=\alpha_j$ for $j>m$, these are precisely the
remaining coordinates of $\widehat{x}$ with respect to the basis
$\beta_{\xi,1},\dots,\beta_{\xi,n}$. Hence
\[
N_\xi(\widehat{x})=p^{-1/2}.
\]
If $\pi(x)\ne\xi$, then the image of $x_{\mathcal{L}}$ does not lie in
the one-dimensional subspace spanned by the image of
$\beta_{\xi,1}$. Hence at least one of $b_2,\dots,b_m$ is a unit in
$\mathbb{Z}_p$, and therefore
\[
N_\xi(\widehat{x})=1.
\]

Finally, suppose that $\pi(x)$ is undefined. Then at least one of
\[
\widehat{x}_{m+1},\dots,\widehat{x}_n
\]
is a unit. Since $\beta_{\xi,j}=\alpha_j$ for $j>m$, at least one
coordinate of $\widehat{x}$ corresponding to
$\beta_{\xi,m+1},\dots,\beta_{\xi,n}$ is a unit. Hence
\[
N_\xi(\widehat{x})=1.
\]
This proves the claim. In particular, for any nonzero query vector $x$, there is at most one
$\xi\in\mathcal{P}$ such that
\[
N_\xi(x)\ne N_0(x).
\]
The same conclusion is trivially true for $x=0$. Equivalently, each norm query can eliminate at most one element of
$\mathcal{P}$ from being indistinguishable from $N_0$.

Put
\[
\mathcal{P}_0:=\mathcal{P}.
\]
For each $j=1,\dots,K$, let $x_j$ be the vector used in the $j$-th norm query when $\mathcal{A}$ is run with the norm $N_0$, and put
\[
y_j:=N_0(x_j).
\]
We define subsets
\[
\mathcal{P}_0\supseteq\mathcal{P}_1\supseteq\cdots\supseteq\mathcal{P}_K
\]
recursively by
\[
\mathcal{P}_j
:=
\{\xi\in\mathcal{P}_{j-1}:N_\xi(x_j)=N_0(x_j)\}.
\]

We claim that, for each $j=0,1,\dots,K$ and every
$\xi\in\mathcal{P}_j$, the first $j$ norm queries performed by
$\mathcal{A}$ under $N_\xi$ are the same as those performed under
$N_0$, with the same returned values. Moreover,
\[
|\mathcal{P}_j|
\ge
|\mathcal{P}|-j.
\]

We prove the claim by induction on $j$. The case $j=0$ is immediate.
Suppose that the claim holds for some $j-1$ with $1\le j\le K$.
For every $\xi\in\mathcal{P}_{j-1}$, the first $j-1$ norm queries
and their returned values under $N_\xi$ are identical to those under
$N_0$. Since $\mathcal{A}$ is deterministic, the vector used in the
$j$-th norm query is therefore the same, namely $x_j$. By the
preceding claim, there is at most one $\xi\in\mathcal{P}$ such that
\[
N_\xi(x_j)\ne N_0(x_j).
\]
Hence
\[
|\mathcal{P}_j|
\ge
|\mathcal{P}_{j-1}|-1
\ge
|\mathcal{P}|-j.
\]
For every $\xi\in\mathcal{P}_j$, we also have
\[
N_\xi(x_j)=N_0(x_j)=y_j.
\]
Thus the first $j$ norm queries and their returned values are identical
under $N_\xi$ and $N_0$. This proves the claim.

Since
\[
K<|\mathcal{P}|,
\]
we have
\[
|\mathcal{P}_K|
\ge
|\mathcal{P}|-K
>0.
\]
Choose $\xi\in\mathcal{P}_K$. By the claim, the entire sequence of norm
queries and returned values under $N_\xi$ is identical to that under
$N_0$. Since $\mathcal{A}$ is deterministic, its final output must also
be the same for the two norms. Denote this common output by
\[
x^\ast\in\mathcal{L}.
\]

Since $\mathcal{A}$ is correct for $N_0$, we have
\[
N_0(x^\ast)
=
\lambda_2^{(N_0)}(\mathcal{L})
=
\frac1p.
\]
Since $x^\ast\in\mathcal{L}$, it follows from the definition of $N_0$
that
\[
x^\ast\in p\mathcal{L}.
\]
Hence
\[
x^\ast=py
\]
for some $y\in\mathcal{L}$. Since $N_\xi(y)\le1$, we obtain
\[
N_\xi(x^\ast)
=
\frac1pN_\xi(y)
\le
\frac1p
<
p^{-1/2}
=
\lambda_2^{(N_\xi)}(\mathcal{L}).
\]
Thus $x^\ast$ is not a valid LVP output for $N_\xi$, contradicting
the correctness of $\mathcal{A}$.

Therefore, the assumption
\[
K<\#\mathcal{P}
\]
is false, and we conclude that
\[
\sup_N g_{\mathcal{A}}(N)
\ge
\#\mathcal{P}
=
\frac{p^m-1}{p-1}.
\]
\end{proof}

Together with the search algorithm above, which uses exactly
\[
\frac{p^m-1}{p-1}
\]
norm queries, the preceding theorem yields the following exact complexity result.

    \begin{corollary}
    For a rank-$m$ $p$-adic lattice with $m\ge 2$, the deterministic worst-case norm-query complexity of the LVP is
    \[
   \dfrac{p^m-1}{p-1}.
    \]
    \end{corollary}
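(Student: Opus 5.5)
The corollary follows by combining the two bounds already established for $m\ge 2$; no new construction is needed.

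\emph{Upper bound.} I would invoke the search algorithm described before the theorem. It queries $N\bigl(\sum_i a_i\alpha_i\bigr)$ for every $(a_1,\dots,a_m)\in\mathcal{R}$, and it queries nothing else. Correctness comes from the claim proved there:
\begin{itemize}
\item if $\mathcal{S}\neq\emptyset$, then $\lambda_2=\max_{x\in\mathcal{S}}N(x)$;
\item if $\mathcal{S}=\emptyset$, then $\lambda_2=\lambda_1/p$, and this value is attained by $p\alpha_i$ for any $i$ with $N(\alpha_i)=\lambda_1$.
\end{itemize}
Since $e_1,\dots,e_m\in\mathcal{R}$, the value $\lambda_1$ comes for free from these same queries. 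So for every ultrametric norm $N$ the algorithm uses exactly $\#\mathcal{R}=(p^m-1)/(p-1)$ queries. Its worst-case count is therefore this number, which gives an upper bound on the deterministic worst-case complexity.

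\emph{Lower bound.} The preceding theorem shows that any deterministic norm-query algorithm $\mathcal{A}$ that is correct for all ultrametric norms satisfies
\[
\sup_N g_{\mathcal{A}}(N)\ge \frac{p^m-1}{p-1}.
\]
Taking the infimum over all correct algorithms, the worst-case complexity is at least this value.

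The two bounds coincide, which proves the corollary. The only point to check is that the algorithm fits the model of the theorem: it is deterministic, it accesses $N$ only through evaluations $N(x)$, and it is correct for every norm. It does, since its output depends only on the returned values.
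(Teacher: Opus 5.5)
Your proposal is correct and matches the paper's argument. The paper obtains the corollary the same way, by pairing the search algorithm over the projective representatives $\mathcal{R}$ (exactly $(p^m-1)/(p-1)$ queries, with $\lambda_1$ read off from $e_1,\dots,e_m\in\mathcal{R}$) with the lower bound of the preceding theorem.
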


\section{Norm-Query Complexity of the CVP}

We now turn to the Closest Vector Problem. We first note that there are
two trivial cases in which no norm query is needed.

If $t\in\mathcal{L}$, then one may simply output $t$, since
\[
N(t-t)=0.
\]
Suppose next that $\mathcal{L}=\mathcal{L}(\alpha_1)$ has rank $1$ and $t\in\mathbb{Q}_p\alpha_1$. Write
\[
t=c\alpha_1,
\qquad
c\in\mathbb{Q}_p.
\]
If $c\in\mathbb{Z}_p$, then $t\in\mathcal{L}$ and we are in the
previous case. If $c\notin\mathbb{Z}_p$, then $|c|_p>1$, and hence for
every $a\in\mathbb{Z}_p$,
\[
|a-c|_p=|c|_p.
\]
Therefore
\[
N(a\alpha_1-t)
=
|a-c|_pN(\alpha_1)
=
|c|_pN(\alpha_1)
\]
for every $a\in\mathbb{Z}_p$. Thus every vector in $\mathcal{L}$ is a
closest vector to $t$, and again no norm query is needed.

Apart from these cases, however, there is no uniform finite bound on
the number of norm queries required to solve the CVP.

\begin{theorem}
Let $V$ be an $n$-dimensional vector space over $\mathbb{Q}_p$ with
$n\ge 2$, let $\alpha_1,\dots,\alpha_m\in V$ be
$\mathbb{Q}_p$-linearly independent vectors with $1\le m\le n$, and let
\[
\mathcal{L}:=\mathcal{L}(\alpha_1,\dots,\alpha_m)
=
\left\{
\sum_{i=1}^m a_i\alpha_i:\ a_i\in\mathbb{Z}_p
\right\}.
\]
Let $t\in V$ satisfy $t\notin\mathcal{L}$ and
$$\dim_{\mathbb{Q}_p}\Big(\mathrm{span}_{\mathbb{Q}_p}\big(\mathcal{L}\cup\{t\}\big)\Big)\ge 2.$$

Assume that $\mathcal{A}$ is a deterministic algorithm which accesses
the norm only through evaluations of the form $N(x)$, and such that,
for any ultrametric norm $N$ on $V$, starting from the
$\mathbb{Z}_p$-basis $\{\alpha_1,\dots,\alpha_m\}$ of $\mathcal{L}$
and the target vector $t$, the algorithm $\mathcal{A}$ outputs a vector
$v\in\mathcal{L}$ satisfying
\[
N(v-t)
=
\min_{w\in\mathcal{L}}N(w-t)
\]
in finitely many steps.

For an ultrametric norm $N$ on $V$, let $g_{\mathcal{A}}(N)$ denote
the number of norm queries required by $\mathcal{A}$ to solve the CVP.
Then
\[
\sup_N g_{\mathcal{A}}(N)=\infty,
\]
where the supremum is taken over all ultrametric norms $N$ on $V$.
\end{theorem}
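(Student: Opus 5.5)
The plan is to run the same adversarial indistinguishability argument as in the proof of Theorem~\ref{thm:ortho}. We build a family of norms $N_{u,M}$, indexed by a $p$-adic parameter $u$ and an integer $M$. For $N_{u,M}$ the set of closest vectors is a small $p$-adic neighbourhood, of radius roughly $p^{-M}$, of a lattice vector whose coefficients depend on $u$. Finitely many queries only pin $u$ down modulo $p^{T}$ and $M$ from below, so a fixed output cannot be correct for all surviving parameters.

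\emph{Construction.} First I normalize coordinates so that the distance to $t$ is governed by a single coordinate, splitting into two cases.
\begin{enumerate}
\item If $t\notin\mathrm{span}_{\mathbb{Q}_p}\mathcal{L}$, extend $\alpha_1,\dots,\alpha_m,t$ to a basis $e_1,\dots,e_n$ of $V$ with $e_i=\alpha_i$ for $i\le m$ and $e_{m+1}=t$. A vector $w=\sum a_i\alpha_i$ then satisfies
\[
w-t=\sum_{i\le m}a_ie_i-e_{m+1}.
\]
I take
\[
N_{u,M}(x)=\max\Bigl(|x_1+ux_{m+1}|_p,\ p^{-M}|x_{m+1}|_p,\ \max_{i\ne 1,m+1}|x_i|_p\Bigr)
\]
with $u\in\mathbb{Z}_p$. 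For $w-t$ this equals $\max(|a_1-u|_p,p^{-M},\max_{2\le i\le m}|a_i|_p)$. Hence the minimum is $p^{-M}$, attained exactly when $a_1\equiv u\pmod{p^M}$ and $a_i\in p^M\mathbb{Z}_p$ for $2\le i\le m$.
\item If $t\in\mathrm{span}_{\mathbb{Q}_p}\mathcal{L}$, then $m\ge2$. Write $t=\sum c_i\alpha_i$; since $t\notin\mathcal{L}$, some $c_j\notin\mathbb{Z}_p$. Pick $k\neq j$ and use coordinates with respect to a basis extending $\alpha_1,\dots,\alpha_m$. I replace the role of $x_{m+1}$ by $p^{-v_p(c_j)}x_j$ rescaled, and couple $x_k$ to it through $u$. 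The residue of $a_j-c_j$ has constant absolute value $|c_j|_p$, so it plays the part of the ``$-1$'' coordinate. Then minimizers again require $a_k\equiv c_k+u'\pmod{p^{M}}$ for an affine function $u'$ of $u$, with the other coordinates suitably fixed.
\end{enumerate}
In both cases I verify directly that $N_{u,M}$ is an ultrametric norm, since it is a max of absolute values of linearly independent linear forms. I then compute the minimum value and the exact minimizing set.

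\emph{Adversary.} Following the proof of Theorem~\ref{thm:ortho} verbatim, I assume $K:=\sup_N g_{\mathcal{A}}(N)<\infty$ and pad to exactly $K$ queries. I inductively produce nested regions
\[
R(c_j,T_j,L_j)=\{(u,M):u\equiv c_j\ (\mathrm{mod}\ p^{T_j}),\ M\ge L_j\}
\]
on which the $j$-th query $x_j$ returns a constant value. The mechanism is the same as before. If the coordinate multiplying $u$ in $x_j$ is zero, the value is independent of $(u,M)$. Otherwise, I fix $c_j$ so that the coupled linear form $\ell_{c_j}(x_j)\neq0$, and then take $T_j$ and $L_j$ large. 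This freezes the valuation of $\ell_u(x_j)$ and makes the $p^{-M}$ term and the perturbation negligible, by the strong triangle inequality. The remaining coordinates of $N_{u,M}$ do not depend on $(u,M)$ at all. After $K$ steps the output $v^\ast=\sum a_i^\ast\alpha_i$ is common to all $(u,M)\in R(c_K,T_K,L_K)$.

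\emph{Contradiction.} Choose $u\equiv c_K\pmod{p^{T_K}}$ with $u\not\equiv a_1^\ast\pmod{p^{T_K+1}}$, or the analogue for $a_k^\ast$ in the second case. This is possible since $u$ ranges over a full residue class mod $p^{T_K}$. Then choose $M\ge\max(L_K,T_K+1)$. The minimizer condition $a_1^\ast\equiv u\pmod{p^M}$ fails, so $N_{u,M}(v^\ast-t)>\min_{w\in\mathcal{L}}N_{u,M}(w-t)$, contradicting correctness.

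I expect the main obstacle to be the second case ($t\in\mathrm{span}\,\mathcal{L}$): the norm must be written so that the minimizing set still moves with $u$, and the ``dominant'' coordinate of $w-t$ must have constant size for every $w\in\mathcal{L}$. I would first try to absorb the non-integral $c_j$ by translating $t$ by $\sum_{i}\lfloor c_i\rfloor\alpha_i$, using integral parts of the $c_i$. This reduces to $t=\sum c_i\alpha_i$ with each $c_i$ either $0$ or of negative valuation, after which the explicit minimization is a short ultrametric computation.
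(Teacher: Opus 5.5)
The gap is in case~2, where you yourself expected trouble, and the reduction you propose does not resolve it. A norm that couples only $x_j$ and $x_k$ and keeps the terms $|x_i|_p$ for the remaining coordinates fails whenever $t=\sum c_i\alpha_i$ has a non-integral coefficient $c_i$ with $i\notin\{j,k\}$. In that case $|a_i-c_i|_p=|c_i|_p>1$ for every $w=\sum a_i\alpha_i\in\mathcal{L}$. So $\min_w N_{u,M}(w-t)\ge |c_i|_p$, and the set of closest vectors can no longer be a $p^{-M}$-neighbourhood that moves with $u$.

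Concretely, take $m=3$, $t=p^{-1}(\alpha_1+\alpha_2+\alpha_3)$, $j=1$, $k=2$. The third coordinate forces $N_{u,M}(w-t)\ge p$ for all $w$, however $x_1,x_2$ are coupled. For $u\in\mathbb{Z}_p$ every lattice vector is then a closest vector, so no contradiction can arise.

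Translating $t$ by the integral parts of the $c_i$ cannot help. Translation by $\mathcal{L}$ changes each $c_i$ only by an element of $\mathbb{Z}_p$, so the set of indices with $c_i\notin\mathbb{Z}_p$ is invariant. Even when only $c_j,c_k$ are non-integral, put $s=-v_p(c_j)$ and take $u\in\mathbb{Z}_p$. Then the coupled term $|a_k-c_k+up^{s}(a_j-c_j)|_p$ equals $|c_k|_p$ for every $w$. So the ``affine function of $u$'' you allude to genuinely has to be worked out.

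The missing idea is a change of $\mathbb{Z}_p$-basis of $\mathcal{L}$, as in the paper. Take $r\ge1$ minimal with $p^rt\in\mathcal{L}$; then $p^rt\notin p\mathcal{L}$. By Nakayama, $\beta_1:=p^rt$ extends to a $\mathbb{Z}_p$-basis $\beta_1,\dots,\beta_m$ of $\mathcal{L}$, and $t=p^{-r}\beta_1$ has a single nonzero coordinate. In your coordinates this amounts to replacing each $x_i$, $i\neq j$, by $x_i-(c_i/c_j)x_j$ with $|c_j|_p$ maximal.

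Now use the norm $\max\{p^{-(M+r)}|x_1|_p,\,|x_2+p^rux_1|_p,\,\max_{i\ge3}|x_i|_p\}$. Its minimum on $\mathcal{L}-t$ is $p^{-M}$, attained exactly when $a_2\equiv u(1-p^ra_1)\pmod{p^M}$ and $a_3,\dots,a_m\in p^M\mathbb{Z}_p$. Your adversary then runs as in case~1. At the end, choose $u\equiv c_K\pmod{p^{T_K}}$ with $u(1-p^ra_1^\ast)\not\equiv a_2^\ast\pmod{p^{T_K+1}}$; this is possible because $1-p^ra_1^\ast$ is a unit.

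Apart from this, your route is a legitimate alternative to the paper's. The paper fixes a single $M$ with $p^M>(M+1)^K$ and restricts to the finite family $u\in\mathcal{U}_M$. It shows each query takes at most $M+1$ values on that family and then pigeonholes, which avoids the nested-region bookkeeping. Your refinement of residue classes with $M\to\infty$, borrowed from Theorem~\ref{thm:ortho}, also works. You must keep $T_j\ge0$ and $c_j\in\mathbb{Z}_p$ throughout, since a verbatim copy allows $T_j<0$. This matters because for $u\notin\mathbb{Z}_p$ every lattice vector is a closest vector.
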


\begin{proof}
We prove the theorem by contradiction. Assume that
\[
K:=\sup_N g_{\mathcal{A}}(N)<\infty.
\]
As in the proofs of the previous theorems, we
may assume without loss of generality that, for every norm, the
algorithm makes exactly $K$ norm queries before it terminates.

Choose a positive integer $M$ sufficiently large so that
\[
p^M>(M+1)^K,
\]
and let $\mathcal{U}_M\subset\mathbb{Z}_p$ be a complete set of
representatives for
\[
\mathbb{Z}_p/p^M\mathbb{Z}_p.
\]
Thus
\[
\#\mathcal{U}_M=p^M.
\]

We construct a family of ultrametric norms
\[
\{N_{u,M}:u\in\mathcal{U}_M\}
\]
on $V$ for which the corresponding sets of closest vectors are pairwise
disjoint. There are two cases to consider.

\begin{enumerate}
\item Suppose first that
\[
t\notin
\operatorname{span}_{\mathbb{Q}_p}
\{\alpha_1,\dots,\alpha_m\}.
\]
Then $\alpha_1,\dots,\alpha_m,t$ are linearly independent over
$\mathbb{Q}_p$. Extend them to a basis
\[
\alpha_1,\dots,\alpha_m,t,\alpha_{m+2},\dots,\alpha_n
\]
of $V$ over $\mathbb{Q}_p$.

For each $u\in\mathcal{U}_M$, define
\[
N_{u,M}\left(
\sum_{i=1}^m x_i\alpha_i
+
c\,t
+
\sum_{i=m+2}^n x_i\alpha_i
\right)
\]
by
\[
\max\left\{
p^{-M}|c|_p,\,
|x_1+uc|_p,\,
\max\limits_{2\le i\le m}|x_i|_p,\,
\max\limits_{m+2\le i\le n}|x_{i}|_p
\right\},
\]
where terms corresponding to empty index ranges are omitted. It is
straightforward to verify that $N_{u,M}$ is an ultrametric norm on $V$.

Let
\[
v=\sum_{i=1}^m a_i\alpha_i\in\mathcal{L}.
\]
Then
\[
N_{u,M}(v-t)
=
\max\left\{
p^{-M},\,
|a_1-u|_p,\,
\max\limits_{2\le i\le m}|a_i|_p
\right\}\ge p^{-M}.
\]
It follows that
\[
\min_{v\in\mathcal{L}}N_{u,M}(v-t)=p^{-M},
\]
and $v=\sum_{i=1}^m a_i\alpha_i$ is a closest vector if and only if
\[
a_1\equiv u\pmod{p^M}
\]
and
\[
a_2,\dots,a_m\in p^M\mathbb{Z}_p.
\]
Consequently, if $u,u'\in\mathcal{U}_M$ are distinct, no vector in
$\mathcal{L}$ can be a closest vector for both $N_{u,M}$ and
$N_{u',M}$.

\item Suppose next that
\[
t\in
\operatorname{span}_{\mathbb{Q}_p}
\{\alpha_1,\dots,\alpha_m\}.
\]
Since $t\notin\mathcal{L}$, there exists a smallest positive integer
$r$ such that
\[
p^rt\in\mathcal{L}.
\]
By the minimality of $r$,
\[
p^rt\notin p\mathcal{L}.
\]
Put
\[
\beta_1:=p^rt.
\]
The image of $\beta_1$ in $\mathcal{L}/p\mathcal{L}$ is nonzero.
Extend this image to an $\mathbb{F}_p$-basis of
$\mathcal{L}/p\mathcal{L}$, and choose lifts
$\beta_2,\dots,\beta_m\in\mathcal{L}$ of the remaining basis vectors.
By Nakayama's lemma,
\[
\beta_1,\dots,\beta_m
\]
form a $\mathbb{Z}_p$-basis of $\mathcal{L}$. By the assumption of the
theorem, we have $m\ge2$. Extend this basis to a basis
\[
\beta_1,\dots,\beta_m,\beta_{m+1},\dots,\beta_n
\]
of $V$ over $\mathbb{Q}_p$.

For each $u\in\mathcal{U}_M$, define
\[
N_{u,M}\left(\sum_{i=1}^n x_i\beta_i\right)
:=
\max\left\{
p^{-(M+r)}|x_1|_p,\,
|x_2+p^ru x_1|_p,\,
\max_{3\le i\le n}|x_i|_p
\right\}.
\]
It is straightforward to verify that $N_{u,M}$ is an ultrametric norm
on $V$.

Let
\[
v=\sum_{i=1}^m a_i\beta_i\in\mathcal{L}.
\]
Since
\[
t=p^{-r}\beta_1,
\]
we have
\[
v-t
=
(a_1-p^{-r})\beta_1
+
a_2\beta_2
+\cdots+
a_m\beta_m.
\]
Since
\[
p^ra_1-1\in\mathbb{Z}_p^\times,
\]
we obtain
\[
p^{-(M+r)}|a_1-p^{-r}|_p=p^{-M},
\]
and hence
\[
N_{u,M}(v-t)
=
\max\left\{
p^{-M},\,
|a_2-u(1-p^ra_1)|_p,\,
|a_3|_p,\dots,|a_m|_p
\right\}\ge p^{-M}.
\]
Therefore
\[
\min_{v\in\mathcal{L}}N_{u,M}(v-t)=p^{-M},
\]
and $v=\sum_{i=1}^m a_i\beta_i$ is a closest vector if and only if
\[
a_2\equiv u(1-p^ra_1)\pmod{p^M}
\]
and
\[
a_3,\dots,a_m\in p^M\mathbb{Z}_p.
\]

Suppose that the same vector $v$ is a closest vector for both
$N_{u,M}$ and $N_{u',M}$. Then
\[
(u-u')(1-p^ra_1)\in p^M\mathbb{Z}_p.
\]
Since
\[
1-p^ra_1\in\mathbb{Z}_p^\times,
\]
it follows that
\[
u\equiv u'\pmod{p^M}.
\]
Since $u$ and $u'$ belong to the fixed set of representatives
$\mathcal{U}_M$, we must have $u=u'$. Thus, also in this case, the
sets of closest vectors corresponding to distinct elements of
$\mathcal{U}_M$ are pairwise disjoint.
\end{enumerate}

We next claim that, in either case, for every fixed vector
$x\in V$, the set
\[
\{N_{u,M}(x):u\in\mathcal{U}_M\}
\]
contains at most $M+1$ distinct values.

In the first case, write
\[
x
=
\sum_{i=1}^m x_i\alpha_i
+
c\,t
+
\sum_{i=m+2}^n x_i\alpha_i.
\]
If $c=0$, then by definition $N_{u,M}(x)$ is independent of $u$. If $c\ne0$, put
\[
z:=-\frac{x_1}{c}.
\]
After factoring out $|c|_p$, we may write
\[
N_{u,M}(x)
=
|c|_p
\max\left\{
p^{-M},\,
|u-z|_p,\,
D
\right\}
\]
for some $D\ge0$ independent of $u$.

In the second case, write
\[
x=\sum_{i=1}^n x_i\beta_i.
\]
If $x_1=0$, then $N_{u,M}(x)$ is again independent of $u$. If
$x_1\ne0$, put
\[
z:=-\frac{x_2}{p^rx_1}.
\]
Since
\[
|p^rx_1|_p\,p^{-M}
=
p^{-(M+r)}|x_1|_p,
\]
we may similarly write
\[
N_{u,M}(x)
=
|p^rx_1|_p
\max\left\{
p^{-M},\,
|u-z|_p,\,
D
\right\}
\]
for some $D\ge0$ independent of $u$.

Thus, in either case, the only dependence on $u$ is through
\[
\max\{p^{-M},|u-z|_p\}.
\]
If $z\notin\mathbb{Z}_p$, then
\[
|u-z|_p=|z|_p
\]
for every $u\in\mathbb{Z}_p$, so the value is independent of $u$. If
$z\in\mathbb{Z}_p$, then, as $u$ ranges over $\mathcal{U}_M$,
\[
\max\{p^{-M},|u-z|_p\}
\]
can take only the values
\[
1,p^{-1},\dots,p^{-M}.
\]
Therefore
\[
\#\{N_{u,M}(x):u\in\mathcal{U}_M\}\le M+1.
\]
This proves the claim.

We now use this claim to construct a sequence of subsets of
$\mathcal{U}_M$ on which the norm queries made by $\mathcal{A}$ are
indistinguishable. Put
\[
\mathcal{U}_0:=\mathcal{U}_M.
\]

Suppose that, for some $j$ with $1\le j\le K$, a subset
$\mathcal{U}_{j-1}\subseteq\mathcal{U}_M$ has been chosen such that the
first $j-1$ norm queries performed by $\mathcal{A}$, together with
their returned values, are the same for all
$u\in\mathcal{U}_{j-1}$. Since $\mathcal{A}$ is deterministic, the
vector used in the $j$-th norm query is therefore also the same for
all $u\in\mathcal{U}_{j-1}$. Denote this vector by $x_j$.

By the preceding claim, the values
\[
N_{u,M}(x_j),
\qquad
u\in\mathcal{U}_{j-1},
\]
take at most $M+1$ distinct values. Hence there exists a subset
\[
\mathcal{U}_j\subseteq\mathcal{U}_{j-1}
\]
such that the $j$-th query returns the same value for every
$u\in\mathcal{U}_j$, and
\[
\#\mathcal{U}_j
\ge
\frac{\#\mathcal{U}_{j-1}}{M+1}.
\]

Continuing inductively, we obtain subsets
\[
\mathcal{U}_0
\supseteq
\mathcal{U}_1
\supseteq
\cdots
\supseteq
\mathcal{U}_K
\]
such that all $K$ norm queries and their returned values are identical
for every $u\in\mathcal{U}_K$, and
\[
\#\mathcal{U}_K
\ge
\frac{p^M}{(M+1)^K}
>
1.
\]
Choose two distinct elements
\[
u,u'\in\mathcal{U}_K.
\]
Since $\mathcal{A}$ is deterministic, its final output must be the same
under $N_{u,M}$ and $N_{u',M}$. Denote this common output by
\[
v^\ast\in\mathcal{L}.
\]
However, in either of the two cases considered above, the sets of
closest vectors for $N_{u,M}$ and $N_{u',M}$ are disjoint. Hence
$v^\ast$ cannot be a valid CVP output for both norms, contradicting the
correctness of $\mathcal{A}$.

Therefore, the assumption
\[
K<\infty
\]
is false, and we conclude that
\[
\sup_N g_{\mathcal{A}}(N)=\infty.
\]
\end{proof}

\section{Conclusion}

We have studied three basic computational problems in finite-dimensional ultrametric normed spaces from the viewpoint of deterministic norm-query complexity. For orthogonalization, we proved that no algorithm which works for every ultrametric norm can have a uniform finite query bound depending only on the dimension. For the LVP, we determined the exact worst-case norm-query complexity: for a rank-$m$ lattice with $m\ge 2$, it is
\[
\frac{p^m-1}{p-1}.
\]
For the CVP, we showed that, apart from the trivial cases in which no norm query is needed, the worst-case norm-query complexity is unbounded, even when the lattice and the target vector are fixed.

More broadly, these results illustrate that computational problems in $p$-adic normed spaces can exhibit complexity behavior substantially different from the Euclidean setting. The lower-bound arguments also show how limited information obtained from norm queries can impose intrinsic computational obstructions. It would be interesting to study randomized norm-query algorithms and other computational problems in $p$-adic normed spaces from this viewpoint.

\bibliographystyle{plain}
\bibliography{ref}

@book{Weil1974,
  author    = {Weil, André},
  title     = {Basic Number Theory},
  edition   = {Third},
  publisher = {Springer-Verlag},
  address   = {New York-Berlin},
  year      = {1974}
}

@article{deng2025p,
  title     = {On $p$-adic {G}ram--{S}chmidt Orthogonalization Process},
  author    = {Deng, Yingpu},
  journal   = {Frontiers of Mathematics},
  volume    = {20},
  number    = {2},
  pages     = {299--311},
  year      = {2025},
  publisher = {Springer}
}

@article{deng2022some,
  title     = {On some computational problems in local fields},
  author    = {Deng, Yingpu and Luo, Lixia and Pan, Yanbin and Xiao, Guanju},
  journal   = {Journal of Systems Science and Complexity},
  volume    = {35},
  number    = {3},
  pages     = {1191--1200},
  year      = {2022},
  publisher = {Springer}
}
\end{document}